\newtheorem{prop}{Proposition}
  \providecommand\BibTeX{{%
    \normalfont B\kern-0.5em{\scshape i\kern-0.25em b}\kern-0.8em\TeX}}}
\begin{document}

\title[Random Isn't Always Fair: Candidate Set Imbalance and Exposure Inequality]{Random Isn't Always Fair: Candidate Set Imbalance and Exposure Inequality in Recommender Systems}

\author{Amanda Bower}
\author{Kristian Lum}
\author{Tomo Lazovich}
\author{Kyra Yee}
\author{Luca Belli}
\affiliation{%
  \institution{Twitter}
  \city{San Francisco}
  \state{California}
  \country{USA}
}

\begin{abstract}



Traditionally, recommender systems operate by returning a user a set of items, ranked in order of estimated relevance to that user. In recent years, methods relying on stochastic ordering have been developed to create ``fairer" rankings that reduce inequality in who or what is shown to users. Complete randomization--ordering candidate items randomly, independent of estimated relevance--is largely considered a baseline procedure that results in the most equal distribution of exposure. In industry settings, recommender systems often operate via a two-step process in which candidate items are first produced using computationally inexpensive methods and then a full ranking model is applied only to those candidates. In this paper, we consider the effects of inequality at the first step and show that, paradoxically, complete randomization at the second step can result in a higher degree of inequality relative to deterministic ordering of items by estimated relevance scores.  
In light of this observation, we then propose a simple post-processing algorithm in pursuit of reducing exposure inequality that works both when candidate sets have a high level of imbalance and when they do not. The efficacy of our method is illustrated on both simulated data and a common benchmark data set used in studying fairness in recommender systems. 
\end{abstract}




\maketitle

\section{Introduction}
Recommender and ranking systems organize vast troves of information to help people make decisions, like what to buy or who to hire. They also help people find relevant information, like content produced on social media or web pages. 
These recommender systems have historically followed the Probability Ranking Principle \cite{robertson1977probability}: to maximize utility to a user of the system, recommended items should appear in decreasing order of their probability of relevance to the user. However, this approach completely ignores who or what is recommended despite the economic and social impacts to them. For example, a recommender system that does not return an individual's content can preclude that content creators' access to the $\geq$ \$10 billion creator economy \cite{stripe}. Image searches that only return the images with the highest relevance can perpetuate occupational gender stereotypes in images by only, for example, returning images of men when queried for professional roles traditionally associated with men. \cite{kay2015unequal}. This over-indexing on the consumers of the system can lead to extreme inequality in who or what gets exposure. 

These inequalities arise for reasons related to the algorithm itself and how the results of the recommendation algorithm are displayed and interacted with by users. On the non-algorithmic end, due to user interface design choices such as limited recommendation spots and human behavior such as limited attention or preference for highly ranked items regardless of relevance \cite{collins2018study,granka2004eye}, a relatively small number of items are recommended and even fewer engaged with. On the algorithmic end, the Probability Ranking Principle exacerbates the disparities of outcomes between items that are \textit{predicted} to be better than others since they will always be ranked in a better position even when differences in predicted relevance is negligible. In the worst case, one item can receive high levels of exposure while a similar item with only marginally worse estimated relevance can receive none. Because the estimated relevance itself can  have high statistical uncertainty, it is possible that this ``winners-take-all"  dynamic could allocate much more exposure to an item that is, in reality, less relevant. 


Stochastically ranking the items--instead of deterministically ranking them--has emerged as one standard approach to mitigate exposure inequality \cite{singh2019Policy, diaz2020evaluating, wu2022joint, bower2021individually, oosterhuis2021computationally, do2021two, singh2018Fairness, usunier2022fast, kuhlman2021measuring, gorantla2022sampling, garcia2021maxmin, singh2021fairness, heuss2022fairness, pandey2005shuffling}. Stochastic rankings provide flexibility in how expected exposure is allocated to items in comparison to the rigid allocation of exposure under deterministic rankings. For example, one popular approach samples rankings from the Plackett-Luce distribution \cite{singh2019Policy, diaz2020evaluating, wu2022joint, bower2021individually, oosterhuis2021computationally}. Under this sampling approach, for each ranking request, items with higher estimated relevance are more likely to appear lower in the ranking, though there is randomness in the exact order in which items occur. In contrast to the deterministic ranking setting, under the Plackett-Luce  sampling, items with similar predicted relevance will receive similar amounts of exposure in expectation. Another common approach used as a baseline for comparison of fair ranking models is to order the candidate items \cite{diaz2020evaluating, wu2022joint} uniformly at random , i.e. every item has equal probability of appearing in each position. This approach is often described as leading to ``fairer" ranking models that more equally distribute exposure across producers. However, in practice we have found that a pure randomization approach can actually result in an {\it increase} of exposure inequality. 

In this paper, we explore how randomized ranking can increase inequality. In short, ranking in industry settings typically operates in a two-step process since it would be infeasible to rank every single possible item for each user \cite{wang2022fairness}. In the first stage, a computationally inexpensive algorithm generates a relatively small candidate set, and then a more computationally expensive approach ranks the candidates. The phenomenon we identified in which randomization across the candidates increases inequality can occur when candidate sets are imbalanced, i.e. when some items are much more likely to appear in the candidate sets than others. 

While there are several aspects to consider when building responsible recommender systems, like demographic biases in model performance, in this paper we specifically focus on inequality of exposure of the items being recommended. We offer two main contributions. First, we demonstrate that when striving to reduce exposure inequality in recommender systems, candidate set imbalance cannot be ignored. In particular, despite the widespread assumption that randomizing reduces inequality, we demonstrate that for recommender systems with imbalanced candidate sets, this need not be the case. Second, we propose a modification to the class of Plackett-Luce distributions that have been used as a post-processing step to reduce inequality of exposure on the individual ranking level while balancing consumer-utility \cite{diaz2020evaluating, wu2022joint}. This modification uses the system level information of how often an item appears in the candidate sets to reduce exposure inequality. Similar to Plackett-Luce post-processing, our approach is simple as it does not require re-training the candidate generation nor ranking model. Moreover, since post-processing methods like ours operate on the relevance scores regardless of how they were learned or obtained, our approach is appropriate in industry settings where the outputs of several models may be combined to produce a ranking of candidates. We illustrate the effectiveness of this approach on synthetic and real data. 

\section{Related Work} 

Over the past few years, the fair ranking literature has rapidly grown. See \cite{ekstrand2022fairness, patro2022fair, li2022fairness, zehlike2021fairness, sonboli2022multisided} for survey papers. While there are many ethical aspects to consider when building recommender systems, in this work, we are interested in the popular problem of inequality of exposure allocation to the items being recommended. In particular, we are interested in the relationship between equity of exposure at the individual ranking level and at the system level. Exposure at the individual ranking level comes from the exposure items receive in a single ranking request from a user at a single point in time. Exposure at the system level comes from the cumulative exposure items receive over all users and all ranking requests over a time period.

There are several papers that focus on distributing exposure more equitably on the \textit{individual ranking level} \cite{singh2018Fairness, singh2019Policy, diaz2020evaluating, heuss2022fairness, wu2022joint} 
which do not take candidate set inequality into account. For example, the canonical work in \cite{singh2018Fairness} requires that the ratio of exposure that an item receives to its relevance should be constant over all items for each ranking request. One reason to focus on exposure at the individual ranking level is that the associated optimization problems become easier to solve since the dependent nature of system level exposure can be ignored. In light of candidate set imbalance, as we show in our experiments, these type of approaches do not necessarily guarantee equity of exposure at the system level. 
 
On the other hand, there are group-fairness notions of equality of exposure on the individual ranking level where exposure equality does generalize to the system level. In particular, since these approaches put constraints on the proportions of items in the ``top $k$" belonging to each demographic group for each ranking, each group at the system level will receive the same level of exposure as they do in each individual ranking \cite{StoyanovichAndYang, zehlike2017fa, DBLP:journals/corr/CelisSV17, geyik2019fairness, 10.1145/3351095.3372858, 0d8212cdb710458e9a32312c0ffe3a11}. However, these approaches do not naturally extend to individual fairness, the main case we consider. 

Another line of work focuses on distributing and measuring exposure more equitably at the \textit{system level}, which implicitly takes candidate set imbalance into account \cite{surer2018multistakeholder, biega2018Equity, do2021two, do2022optimizing, lazovich2021measuring, zhu2016attention}. Instead of turning to stochastic rankings, the work in \cite{biega2018Equity} amortizes equity of exposure on the system level over time. However, their approach as well as \cite{surer2018multistakeholder} requires solving an integer linear program. 

Recently, the works in \cite{do2021two, do2022optimizing} overcame the previously mentioned optimization challenges that arise when reducing system level exposure directly via the Franke-Wolfe algorithm and leveraging tools in non-smooth optimization. Their approach learns a stochastic ranking model given relevance scores by optimizing welfare functions that take into account user utility and item side inequality. Since our algorithm does not require training a model, our approach can more quickly adapt to dynamic environments that effect candidate set imbalance although recently a fast online algorithm was proposed \cite{usunier2022fast} applicable to differentiable models like in \cite{do2021two} but not \cite{do2022optimizing}. Furthermore, along the way, the authors of \cite{do2021two} also note a similar finding as ours wherein a class of stochastic rankings theoretically behave unexpectedly in terms of producer inequality and consumer utility, albeit in a different setting.  


In either case, stochastic rankings are one standard way of achieving system level or individual-level equity of exposure \cite{singh2019Policy, diaz2020evaluating, wu2022joint, bower2021individually, oosterhuis2021computationally, do2021two, singh2018Fairness, usunier2022fast, kuhlman2021measuring, gorantla2022sampling, garcia2021maxmin, singh2021fairness, heuss2022fairness, pandey2005shuffling}. 
The work in \cite{diaz2020evaluating, wu2022joint} proposes sampling rankings from a class of Plackett-Luce distributions as a post-processing step to achieve individual-level equity of exposure. They make the observation that as stochasticity is increased, inequality of exposure decreases. In the case of a random ranking then, expected exposure is equalized for all items in a single ranking request. However, in our work, due to candidate set inequality, we argue that this post-processing algorithm does not guarantee system level equality of exposure. We make a simple modification to this algorithm \cite{diaz2020evaluating, wu2022joint} by leveraging system level information. 

Finally, very recently, methods to change the candidate sets themselves have emerged. For example, the authors in \cite{wang2022fairness} propose algorithms to obtain group-fair candidate sets. Instead of modifying the candidate sets, our work leverages information about the candidate sets to achieve more equitable rankings in cases where candidate set inequality is significant but not insurmountable. 

\section{Motivating Toy Example}\label{sec:toy}

In this section, we illustrate how system level inequality of exposure can actually increase even when the final ranking a user sees is drawn uniformly at random from all permutations of their candidate set.

To illustrate how completely randomizing each ranking request can increase system level inequality, we present the following toy example. Consider a case where we have consumers \{1, 2, ..., 10\} and producers \{A, B, C, ..., J\}.
For each consumer, the candidate generation process returns four candidates, which are ranked as shown in Figure \ref{fig:toy}. Suppose for simplicity that each consumer stops after the first item that is returned.
\begin{figure}
    \centering
    \includegraphics[width=6in]{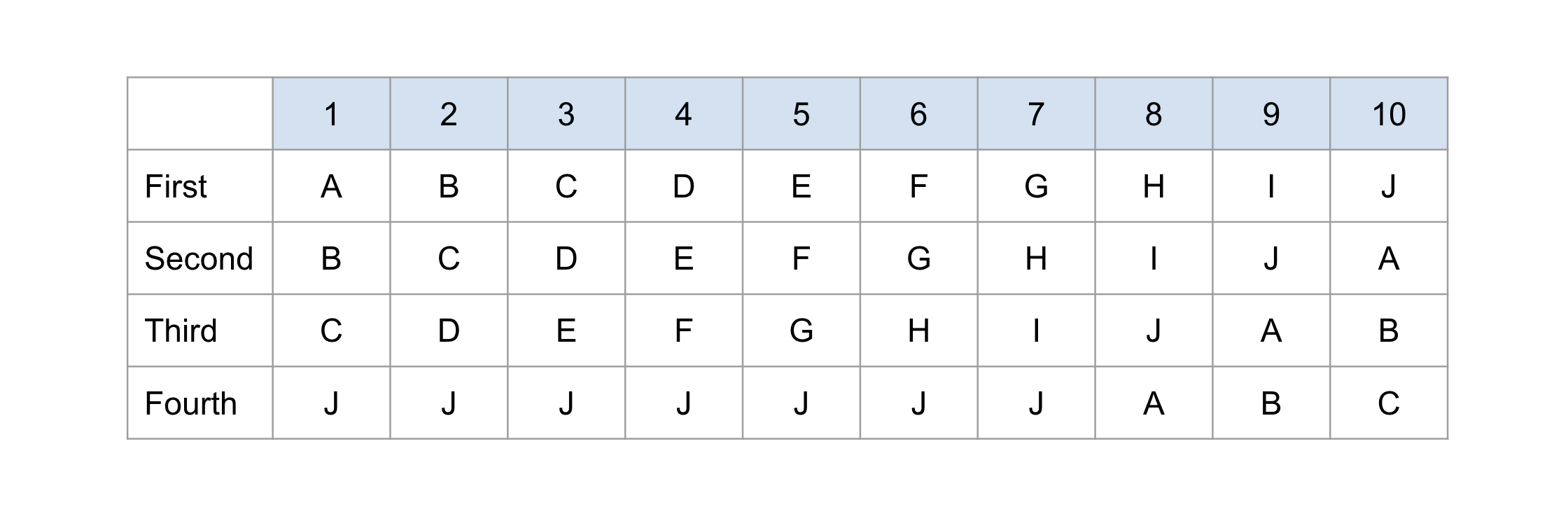}
    \caption{Ranked candidate sets for consumers \{1, 2, ..., 10\} and producers \{A, B, ..., J\}. }
    \label{fig:toy}
\end{figure}
Then, if we use deterministic ranking, every producer receives exactly the same number of impressions--exactly one. However, if we first randomize the order among the four candidates, then producer J has a 1/4 chance of being shown to every consumer, resulting in 2.5 impressions on expectation overall. The expected number of impressions are shown in Figure \ref{fig:expected_impressions} under both scenarios. In this case, it is clear that randomization would increase exposure inequality relative to ranking because of the imbalanced nature of the candidate sets. 

\begin{figure}
    \centering
    \includegraphics[width=6in]{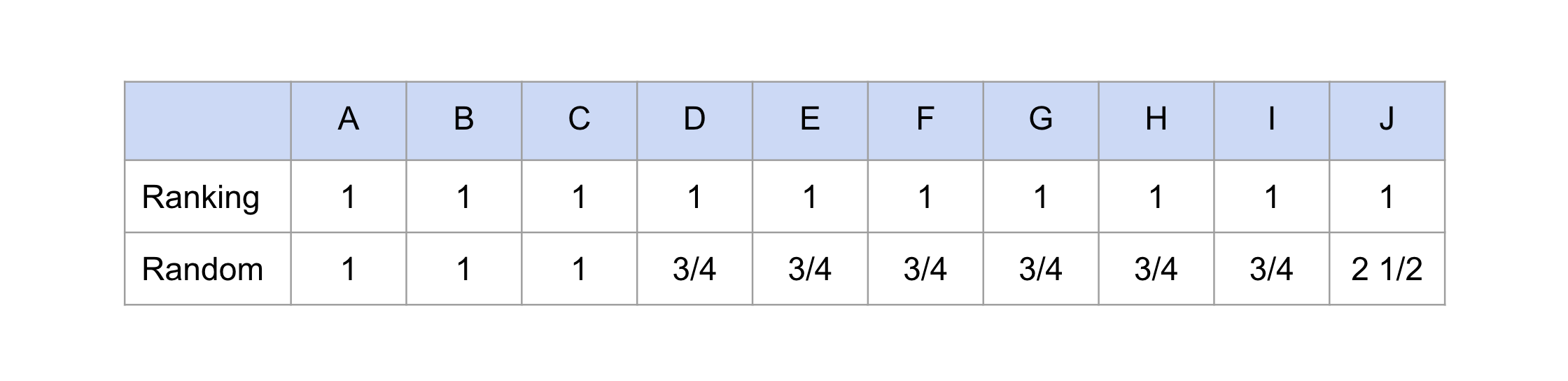}
    \caption{Expected number of impressions for each producer under ranking and randomization.}
    \label{fig:expected_impressions}
\end{figure}

While this example may seem contrived, this phenomenon, where lower ranked content tends to appear much more often in candidate sets overall, can occur in practice. One such reason is to prevent empty or small candidate sets. When the candidate sets that are returned by the first stage model are small, popular items may be used to backfill  candidate sets. These popular items may not be the most relevant items to a user, resulting in low relevance scores, but could appear in many candidate sets to ensure every candidate set is of an appropriate size.

\section{Post-Processing Algorithm: Plackett-Luce With Inverse Candidate Frequency Weights}\label{sec:algo}
We now present a post-processing algorithm to mitigate the effect of candidate set inequality. While it is reasonable to directly intervene on the first stage of candidate generation like in \cite{wang2022fairness}, we argue it is also valid to intervene directly at the second stage of ranking the candidates for a couple reasons. First, candidate sets are populated from many different sources, like popular accounts, accounts that are followed by accounts a user follows, etc. If we directly intervene at the candidate generation stage, we need to take into account dependencies between these multiple sources that would otherwise be independent from one another, which may increase latency to an unacceptable level and may result in small candidate sets for some users since some sources like popular account sources are used to ensure each user has a reasonable number of candidates. Second, intervening on the second step arguably has more of a direct impact on what recommendation a user ultimately sees as there may be hundreds of candidates but only a handful of candidates actually selected to be recommended.

In \cite{diaz2020evaluating, wu2022joint}, the authors propose a simple post-processing algorithm to achieve equity of exposure on the individual ranking level by taking the relevance scores output by a model, transforming the scores either by exponentiation by a constant or multiplying them by a constant, and then sampling a ranking from the Plackett-Luce distribution. Their transformation of scores allows them to smoothly interpolate between completely randomizing the candidate sets (constant is zero) and deterministically ranking the items in decreasing order of relevance score (constant is large). 

We propose a modification to this post-processing algorithm by incorporating the frequency that each item appears in candidate sets in pursuit of achieving system level equity of exposure. We call our approach ``Plackett-Luce With Inverse Candidate Frequency Weights." In particular, suppose there are $n$ users $U := \{u_i\}_{i=1}^n$ and $m$ items $V: = \{v_i\}_{i =1}^{m}$. For every user $u \in U$, let $\{v_{u_i}\}_{i = 1}^{m_{u}}$ be the set of $m_{u}$ candidate items that need to be ranked for user $u$. Let $\{r_{{u}_i}\}_{i = 1}^{m_{u}}$ be the set of relevance scores corresponding to the items. For simplicity, we assume each user requests just one ranking. Let $\alpha, \beta \in \mathbb{R}$. Let $W_{v}$ be the number of candidate sets that item $v \in V$ appears in. Then for user $u \in U$, the probability that the ranking $(v_{u_{1}}, v_{u_{2}}, \dots, v_{m_u})$ is sampled is

\begin{align}\label{eq:pl}
    \prod_{j = 1}^{m_u} \frac{\alpha \exp(\beta r_{u_j}) + \frac{1}{W_{v_{u_j}} + \alpha}}{\sum_{\ell = j}^n \alpha \exp(\beta r_{u_{\ell}}) + \frac{1}{W_{v_
    \ell}+ \alpha}}.
\end{align}

Considering Equation \eqref{eq:pl} for all possible rankings, it is easy to see that it is more likely to sample rankings such that items corresponding to larger, respectively smaller, $\alpha \exp(\beta r_{u_j}) + \frac{1}{W_{v_{u_j}} + \alpha}$ are highly, respectively lowly, ranked. When $\beta = O( \alpha)$ and $\beta >0$, the hyperparameter $\alpha$ allows this distribution to smoothly interpolate between the deterministic ranking given by sorting the items by decreasing relevance score (when $\alpha \rightarrow \infty$) and stochastic rankings that are more likely to highly (lowly) rank items that appear less (more) frequently among all candidate sets (when $\alpha \rightarrow 0^{+}$). When each item appears in the candidate sets at equal rates and $\alpha$ is large, our algorithm behaves like the algorithm in \cite{diaz2020evaluating, wu2022joint}.  For a fixed $\alpha$, $\beta$ controls the level of stochasticity. To efficiently sample rankings from the Plackett-Luce distribution, we utilize ``Algorithm-A" in \cite{efraimidis2006weighted}, which is an $O(N \log(N))$ algorithm that solves the ``weighted sampling" problem when there are $N$ items to rank.

Furthermore, we can easily extend our individual fairness type method to group fairness by replacing $W_{v_i}$, which is the number of times user $v_i$ appears in the candidate sets, with $W_{g(v_i)}$ defined as the number of times a candidate from group $g(v_i)$ appears in the candidate sets where $g(v_i)$ is the demographic group user $v_i$ belongs to. 


\section{Experimental Results}
We consider experiments on a synthetic data set where ground truth relevance scores are known as well as the German credit data set \cite{Dua:german} where relevance scores need to be learned from features of the candidates to test our approach under a more realistic scenario. The German credit data set is a real binary classification data set that is commonly repurposed for ranking tasks in the fair ranking literature \cite{zehlike2021fairness}.  

\subsection{Set-up}
We now describe the common set-up in both experiments. Motivated by the example in Section \ref{sec:toy}, we simulate scenarios where some candidates who have relatively low relevance scores are significantly more likely to appear in a candidate set than others. This can be achieved if the rank at which these candidates appear in the final recommendation is negatively correlated with their scores.

\subsubsection{Basics} Recall that we consider a two-stage process for recommendations. First, for each user requesting a ranking, a candidate set is selected. Then the items in the candidate set are ranked. We assume that the size of each candidate set, $k$, is constant over all users and that each user views exactly the first $\ell$ candidates per ranking presented to them. We assume each user makes exactly one ranking request. Each candidate has a relevance score that is constant over the users. These scores are either given as ground truth in the synthetic experiments or learned from a model in the German credit experiments.

\subsubsection{Candidate Set Construction}\label{sec:candidates}

The candidate sets are sampled from a certain probability distribution over all candidate sets of size $k$. In particular, each candidate $i$ is assigned a candidate score $c_i$. For all but 10 of the $m$ candidates, the candidate score for each candidate is sampled from a beta distribution $B(\alpha = 1, \beta = 10)$. The other 10 are assigned a relatively large candidate score of 5 compared to the scores of the other items so that these items will be more likely to be selected in the candidate sets. Then we build each candidate set iteratively by sampling one candidate out of all possible remaining candidates without replacement until we have selected $k$ candidates. The chance that a candidate is selected at any step is the ratio of its candidate score to the sum of the candidate scores of all remaining potential candidates. 

\subsubsection{Algorithms}
We evaluate our post-processing algorithm \textbf{``Plackett-Luce with Inverse Candidate Frequency Weights"} from Section \ref{sec:algo} against a handful of other baselines. We refer to it as \textbf{``PL-ICFW"} for short. We call the special case of $\alpha, \beta = 0$ in our post-processing algorithm \textbf{``Inverse Weighted."} The algorithm that ranks candidates in decreasing order of relevance score is called \textbf{``Deterministic."} As we stated in the introduction, this type of algorithm has historically been used in recommender systems based on the Probability Ranking Principle \cite{robertson1977probability}. We also compare our approach to sampling rankings from the Plackett-Luce distribution as a function of the relevance scores multiplied by a constant as proposed in \cite{diaz2020evaluating, wu2022joint}. 
We refer to this approach as \textbf{``Scaled Plackett-Luce"} or \textbf{``Scaled PL"} for short. Our approach is similar except we include the extra $\frac{1}{W_{v_{u_j}} + \alpha}$ terms in Equation \eqref{eq:pl} and our hyperparameter $\alpha$. The algorithm that uniformly at random samples a ranking of the candidates is called \textbf{``Randomized"}, which is a special case of Scaled Plackett-Luce. In the German credit experiments, we also compare to the stochastic rankings learned via policy-gradients with individual fairness constraints in \cite{singh2019Policy}, and refer to it as \textbf{``PG-Rank"}. Because PG-Rank is an in-processing algorithm that learns relevance scores, we do not consider this approach in the synthetic experiment case where the relevance scores are already known. All algorithms except Deterministic are stochastic in nature.

Furthermore, we illustrate our algorithm under several choices of $\alpha$ and two different choices of $\beta$ as a scalar multiple of $\alpha$ in Equation \eqref{eq:pl}. Similarly, we vary the one hyperparameter in Scaled Plackett-Luce that multiplies the relevance scores by a constant before sampling from the Plackett-Luce distribution. See the Appendix in Section \ref{appendix:hyperparameter} for the specific choices of hyperparameters for PL-ICFW and Scaled PL for both experiments. 

\subsubsection{Evaluation Metrics}
 We measure performance by looking at the relationship between item-side inequality of exposure and content quality of items recommended to users. We make the normative assumption that viewed recommendations being concentrated to a small number of users is inherently unfair. Therefore, we evaluate system-level inequality by evaluating what percent of viewed recommendations belongs to the top 1\% of users. We refer to this metric as \textbf{``T1PS."} This metric for measuring system-level inequality has been recently proposed in \cite{lazovich2021measuring}  and has been used in the empirical work in \cite{zhu2016attention} for measuring inequality on social media. This metric is related to the Gini coefficient, which has also recently been used to measure inequality in recommender systems \cite{do2022optimizing, do2021two}. We opt to use T1PS since it is more interpretable than the Gini coefficient. We call the performance of the ranking algorithm on the users receiving recommendations the \textbf{``content quality."} We measure content quality by the sum of the ground truth relevance scores of items that have been recommended and viewed by users divided by the number of users $m$. Note that a candidate's ground truth score may contribute to content quality multiple times if that candidate was viewed by multiple users. To be clear, for the German credit experiments, the ground truth relevance score of a candidate is its ground truth binary label. For the synthetic experiments, the relevance score is a continuous number, not a binary label. 
\subsection{Synthetic Data} 
\subsubsection{Set-up} In our synthetic data set-up, there are $n = 2000$ users and $m = 1000$ unique candidates. The candidate set size is $k=40$, and each user views the first $\ell=10$ candidates they are recommended. For each candidate $i$ with candidate score $c_i$, we assign a ground truth relevance score of $r_i := \max\{0, 5 - c_i + x_i\}$ where $x_i \sim N(0,1)$. 

\subsubsection{Results} See Figure \ref{fig:synthetic} for the performance of all the algorithms on the synthetic data set. There are several behaviors in this plot we would expect. First, our algorithm smoothly interpolates between Inverse Weighted and Deterministic. Second, we see the expected behavior that Scaled Plackett-Luce can smoothly interpolate between Randomized and Deterministic. Third, as expected due to our simulation set-up, we see that the Randomized scenario results in the most unequal outcome for producer-side exposure: more than 20\% of viewed recommendations are attributed to the top 1\% of recommended items. Fourth, the Deterministic ranking has the highest content quality as expected since by definition it recommends items with the highest relevance scores.

Interestingly, for the range of hyperparameter choices for Scaled PL\footnote{Recall this hyperparameter is a constant factor such that each score is multiplied by it before sampling from the Plackett-Luce model.}, while content quality monotonically decreases as the hyperparameter decreases, T1PS initially decreases but then gravely increases again, far beyond Deterministic. On the other hand, for our approach when $\beta = \alpha$, while T1PS almost monotonically decreases as $\alpha$ gets smaller, content quality does not and actually increases as $\alpha \rightarrow 0$. We see that for relatively large hyperparameter choices that result in the sampled rankings being close to the Deterministic ranking, both our PL-ICFW method and Scaled PL behave similarly. However, the two methods diverge with relatively small hyperparameter choices, and $\beta = \alpha$ by far outperforms $\beta = .35\alpha$. Our PL-ICFW method under both choices of $\beta$ significantly outperforms Scaled PL since there are several instances where PL-ICFW has the same or better content quality and the same or better T1PS. The minimum T1PS that Scaled PL can achieve is ~6\% whereas our approach achieves ~2\%. We do not expect to achieve the perfect case of T1PS being 1\% since in general it is impossible to achieve equality of exposure among all producers. See Section \ref{sec:impossibility} in the Appendix.

\begin{figure}
\includegraphics[width=.9\textwidth]{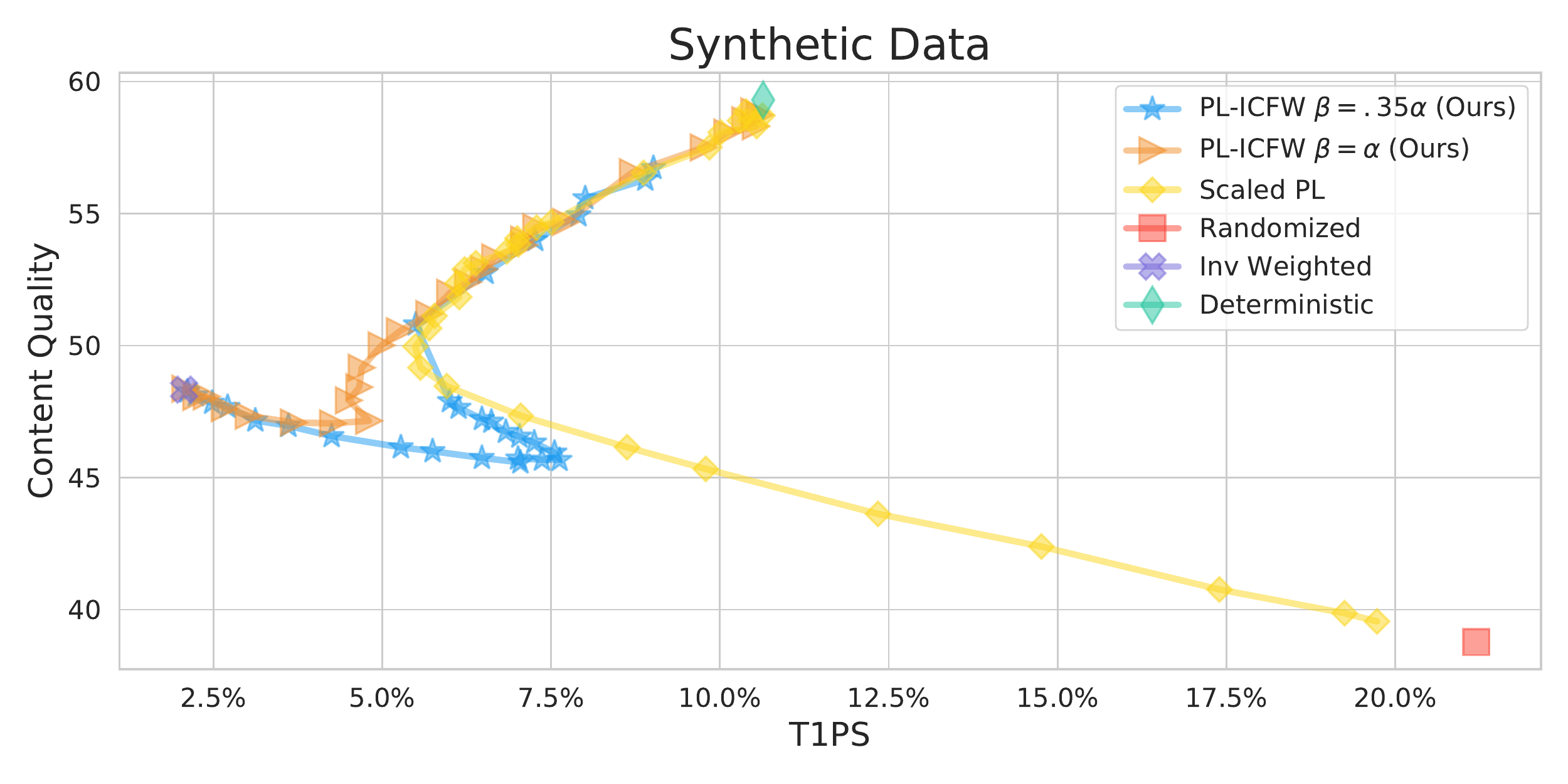}
\caption{Comparison of different algorithms on the synthetic data set measured in terms of inequality to candidate items being recommended and content quality to the users who receive the recommendations. }
\label{fig:synthetic}
\end{figure}

\subsection{German Credit Data}

\subsubsection{Set-up} The German credit data set contains 1000 individuals seeking a loan from a bank. Each are labeled as good or bad credit risk and have features relating to demographic information as well as employment, education, credit history, and so on \cite{Dua:german}. We follow the pre-processing steps of Singh. et. al.\footnote{Using the notebook here: \url{ https://github.com/ashudeep/Fair-PGRank/blob/master/GermanCredit/}} to allow for comparison with the PG-Rank work \cite{singh2019Policy}. Nine features from the dataset are used as inputs to a linear model: age, sex, amount and duration of credit history, job and housing status, savings and checking account balances, and purpose of the loan. The numerical features age, credit amount, and credit duration are all shifted and scaled to a standard normal distribution. The remaining categorical features are encoded as one-hot vectors. Ultimately, this leads to a 29-dimensional input to the linear model. 

We train a linear model with no fairness regularization based on the code in the notebook referenced above to obtain predicted relevance scores of the candidates for all algorithms except PG-Rank. Our algorithm and all the other baseline algorithms except PG-Rank require relevance scores since these algorithms do not learn the scores themselves. We also train several individually fair PG-Rank models based on their code by varying their hyperparameter that controls disparities in exposure at the individual level. See the Appendix in Section \ref{appendix:hyperparameter} for the specific hyperparameter choices.

The ten candidates whose candidate scores are 5 as discussed in Section \ref{sec:candidates} are the ten candidates with the 50-60 lowest learned relevance scores. There are $m = 200$, candidates, $n = 2000$ users who request recommendations, the candidate sets have size $k = 15$, and each user requesting recommendations sees $\ell = 5$ recommendations. 

\subsubsection{Results} 

See Figure \ref{fig:german} for a comparison of the algorithms on the German credit data set. Unsurprisingly, similar to the synthetic data set case, we observe the same four expected behaviors as discussed before. When T1PS is larger than ~16\%, the models all perform similarly with PL-ICFW with $\beta=\alpha$ slightly underperforming the other three algorithms. However, for content quality less than 4.2, our algorithm PL-ICFW under both choices of $\beta$ significantly outperforms the other algorithms since for the same level of content quality, our method has a significantly lower T1PS. All other methods cannot achieve T1PS lower than ~12\% while our method achieves values of T1PS less than ~3\% for the same level of content quality.

\begin{figure}
\includegraphics[width=.9\textwidth]{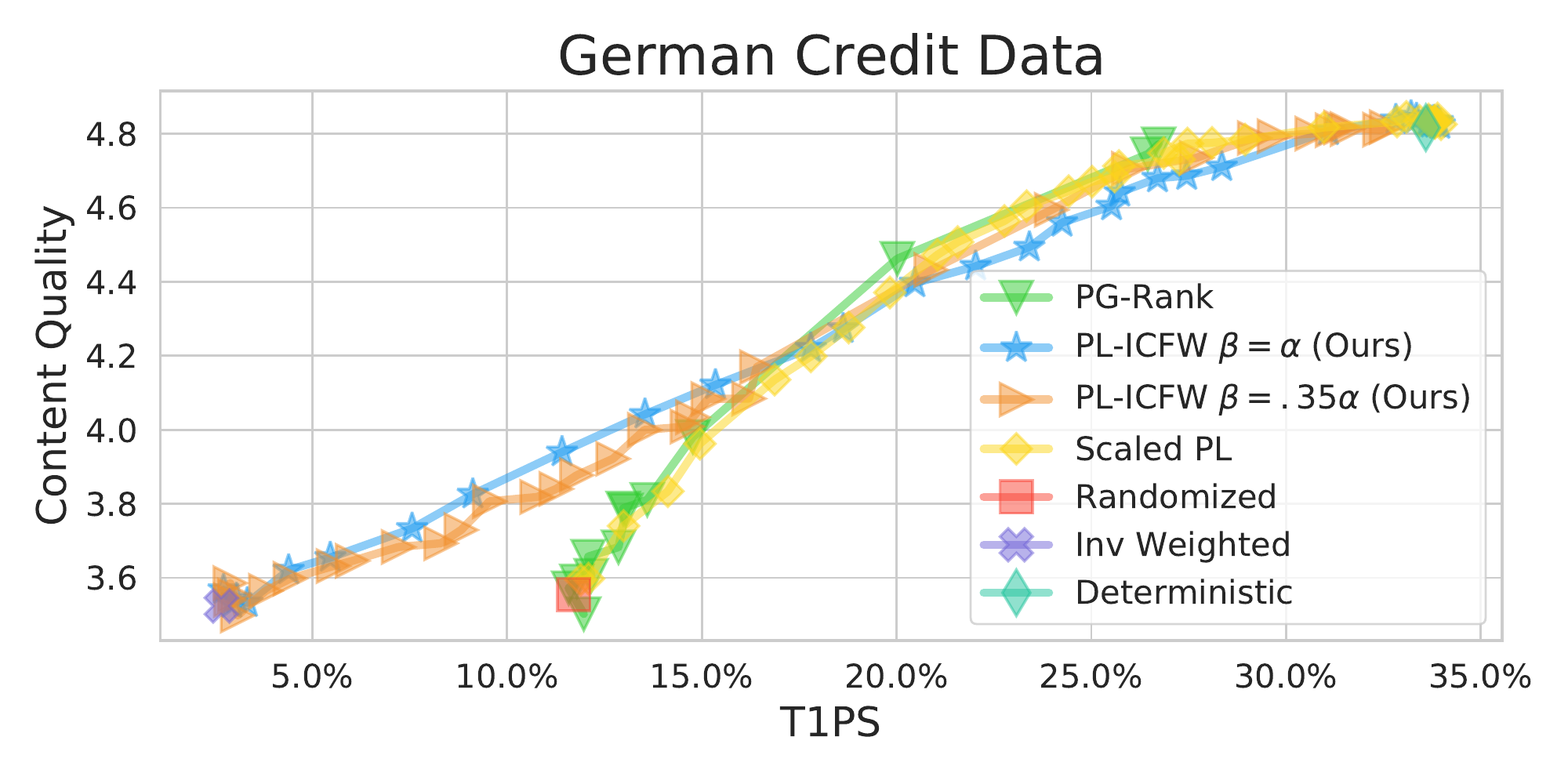}
\caption{Comparison of different algorithms on the German credit data set measured in terms of inequality to candidate items being recommended and content quality to the users who receive the recommendations.}
\label{fig:german}
\end{figure}

\section{Conclusion And Limitations}
\subsection{Limitations}
In many settings, the set of items to rank is continually changing and we may not know the exact frequency with which items will occur in candidate sets. We may need to estimate these frequencies based on a prior time period's data. It is possible that the estimates are sensitive to the choice of time period. Although our post-processing is simple, the optimality of the distribution in Equation \eqref{eq:pl} is unknown. There may be better distributions that trade-off consumer-utility and system level exposure. Additionally, because our algorithm requires hyperparameters and the ranking distribution is a function of the relevance scores, any change to the distribution of the item scores due to a new model will require a new hyperparameter search. This issue is not limited to our approach, but any post-processing approach with hyperparameters, like score boosting approaches \cite{ha2020counterfactual, nandy2021b}. Finally, if the candidate sets contain high levels of inequality, it may be infeasible for any approach that does not directly modify the candidate sets to achieve equitable exposure.

\subsection{Conclusion}
There are several aspects to consider when building responsible recommender systems, like demographic disparities in model performance. Our work specifically focuses on one aspect: inequality of exposure of the items being recommended. We show that even common-sense solutions to ``fair ranking" can behave unexpectedly if they do not take into account candidate set inequality. We showed that even when individual ranking level exposure is completely equal for a given set of candidates, system level exposure measuring the cumulative exposure items receive over all users may not be due to candidate set inequalities. Although others have pointed out the distinction between individual level exposure and system level exposure \cite{sonboli2022multisided, raj2020comparing}, we provide a concrete example of the seemingly paradoxical relationship between system and individual ranking level exposure. We proposed a simple, computationally inexpensive post-processing algorithm that trades-off consumer-utility and producer-side exposure both in cases with that do and do not exhibit high candidate set imbalance. Additionally, our post-processing algorithm provides a baseline for producer-side experimentation, where the effects of algorithmic changes are measured on the producers of content \cite{ha2020counterfactual, nandy2021b}, 
since we can allocate exposure to each item relatively evenly.


\bibliographystyle{ACM-Reference-Format}
\bibliography{AB}

\appendix

\section{Impossibility of Equal Exposure}\label{sec:impossibility}
In this section, we show theoretically why it is impossible to achieve equality of exposure among all producers in general.

\subsection{Notation and Model}
Suppose there are $n$ consumers $\{c_i\}_{i=1}^n$ receiving recommendations of $m$ producers $\{p_i\}_{i=1}^m$. Each consumer $c_i$ has a candidate set of producers $C_i = \{p_{c_{i_j}}\}_{j=1}^{M}$ such that some of these candidates are recommended to the consumer in a ranked list. We assume each producer can show up only once in these sets. We will also make the simplifying assumption that $|C_i| = |C_j| = M$ for all $i,j$. Let $D_i = \{c_{p_{i_j}}\}_{j=1}^{p_{i_N}}$ be the set of $p_{i_N}$ consumers such that there exists $C_{p_{i_j}}$ where producer $p_i \in C_{p_{i_j}}$. Let $I_{c_i}$ be the set of producers that consumer $c_i$ impresses on. Let $E_j = \sum_{c_i \in D_j} \mathds{1}_{p_j \in I_{c_{i}}}$ be the exposure the producer $p_j$ receives measured in number of impressions, where an impression means a consumer saw $p_j$ being recommended to them.

\subsubsection{Model for How Consumers Impress on Recommendations: Subset Selection}
Analyzing how much exposure producers receive requires a model for how consumers impress on recommendations. For each consumer, we pick $k$ candidates to recommend and assume each consumer impresses on all $k$ candidates. The ordering of the candidates is irrelevant since all are impressed on. This model is reasonable when $k$ is small. For example, if a small module recommended three new accounts to follow on a social media platform, it is likely that a user would see all three recommendations. 


\begin{prop}
In general, it is impossible to rank the candidates so that $E_j = E_i$ for all $i,j \in [m]$.
\end{prop}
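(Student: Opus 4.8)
The plan is to prove impossibility by an elementary counting argument backed by an explicit construction, in the spirit of the toy example of Section~\ref{sec:toy}. The crucial observation is that, under the subset-selection model, the exposure a producer can receive is capped by the candidate sets alone, independently of how we rank --- equivalently, of which $k$-subset of each $C_i$ we choose to recommend. Indeed, since $E_j = \sum_{c_i \in D_j} \mathds{1}_{p_j \in I_{c_i}}$ and $|D_j| = p_{j_N}$, we get the ranking-free bound $E_j \le p_{j_N}$, the number of candidate sets that contain $p_j$. At the same time, each consumer impresses on exactly $k$ producers, so summing gives $\sum_{j=1}^m E_j = nk$. These two facts are all that is needed.

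First I would dispatch the divisibility obstruction: if $E_i = E_j$ for all $i,j$, then every $E_j$ equals $nk/m$, which is impossible when $m \nmid nk$, since each $E_j$ is a nonnegative integer. This already proves the statement as phrased (``in general'' read as ``for some instances''), but it engages only trivially with candidate-set imbalance, so I would next exhibit an instance with $m \mid nk$ in which equality still fails. Take $n = m$ consumers and producers, candidate sets of common size $M \ge 2$, and $2 \le k \le M$; arrange the sets so that one distinguished producer lies in all $n$ of them while some producer, say $p_1$, lies in strictly fewer than $k$ of them. (For $M$ large relative to $k$ there is ample room for this with all $|C_i| = M$ and no producer repeated inside a set; at the extreme one may let $p_1$ appear in a single candidate set.) Then $nk = mk$ is divisible by $m$, so the only possible common exposure value is $E_j = k$, yet $E_1 \le p_{1_N} < k$ --- a contradiction. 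Hence no ranking equalizes exposure.

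The only part requiring genuine care is the combinatorial feasibility of the construction, namely that candidate sets with the prescribed common size and incidences actually exist; this is a short slot-counting check ($nM$ slots must accommodate the distinguished producer $n$ times together with the chosen multiplicities of $p_1, \dots, p_{m-1}$, subject to at most one copy of each producer per set), and it is satisfied by taking $M$ sufficiently large. I would close by noting that the same argument in fact establishes the sharper claim that equal exposure is impossible whenever $\min_j p_{j_N} < nk/m$ --- precisely when some producer appears in fewer candidate sets than its ``fair share'' of impressions would require --- which is the regime of candidate-set imbalance motivating this work. The conceptual point worth stressing is that $E_j \le p_{j_N}$ holds for every ranking at once, so a single imbalanced instance defeats all rankings simultaneously rather than merely one.
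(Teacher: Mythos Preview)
Your argument is correct and takes a genuinely different route from the paper. The paper argues by counting degrees of freedom: it encodes the ranking choices as binary variables $x_{i,p_{i_j}}$, observes that the constraints $\sum_{j} x_{i,p_{i_j}} = k$ and $E_i = E_j$ together impose $\binom{m}{2} + n$ linear equations in $Mn$ unknowns, and concludes (under the regime $M < m$ and $m \approx n$) that the system is overdetermined and hence generically infeasible. Your approach is instead a direct counting-plus-construction argument: from $\sum_j E_j = nk$ and the ranking-independent cap $E_j \le |D_j|$ you derive that equal exposure forces $E_j = nk/m$, then exhibit an explicit instance with $\min_j |D_j| < nk/m$. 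What this buys you is an actually rigorous impossibility (the paper's ``more equations than variables'' step does not by itself rule out solutions, particularly over $\{0,1\}$), a concrete witness rather than a genericity claim, and the sharper sufficient condition $\min_j p_{j_N} < nk/m$ that ties the obstruction directly to candidate-set imbalance. The paper's dimension count, on the other hand, conveys the heuristic that the equal-exposure constraints are far too numerous to be satisfiable outside of special configurations, which your argument does not immediately suggest. One minor polish: in your construction you should state the slot-counting check explicitly rather than deferring it to ``$M$ sufficiently large,'' since it is a two-line verification and the proof otherwise reads as complete.
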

\begin{proof}
Let $x_{i,p_{i_j}}$ be 1 if for consumer $i$, producer $p_{c_{i_j}} \in C_i$ is selected as one of the $k$ selected producers and 0 otherwise. Clearly under the $k$-subset selection recommendation model, a valid ranking requires $$\sum_{c_{i_j} \in C_i} x_{i,p_{i_j}} = k$$ for all $i \in [n].$ Therefore, we have $n$ linear equations with $Mn$ variables. 

However, in order to satisfy $E_j = E_i$ for all $i, j \in [m]$, we now must satisfy an additional ${m \choose 2}$ linear equations of the same $Mn$ variables since $E_j$ can be written as
$$\sum_{c_i \in D_j} x_{i,p_{i_j}}.$$ Altogether, we must satisfy ${m \choose 2} + n = O(m^2 + n)$ linear equations, in $Mn$ variables. Clearly $M < m$ since there cannot be more candidates than there are producers to recommend. It is also reasonable to assume $m = n$ since every consumer can be a producer and vice versa. Therefore, there are more than $m^2$ linear equations and less than $m^2$ variables. This system of linear equations cannot be satisfied in general.
\end{proof}
Intuitively, it is clear that in general the exposure each producer receives cannot be equal. For example, if one candidate set contains $k+1$ producers that do not appear in any other candidate set, then clearly one producer will necessarily get 0 exposure. Furthermore, if a producer shows up in exactly one candidate set, they can only receive one impression at most.

\section{Hyperparameter Choices in Experiments}\label{appendix:hyperparameter}
For the synthetic experiments, for both PL-ICFW (our method) and Scaled Plackett-Luce, we vary $\alpha$ in our model and the single hyperparameter for PL in $[.01, .025, 1.5,
2, 2.5, 3, 3.5, 4, 4.5]$ and the numbers starting at .05 to 1 in increments of .05. We set $\beta$ to be either $\alpha$ or $.35\alpha.$ 

For the German experiments, for both PL-ICFW (our method) and Scaled Plackett-Luce,  we vary $\alpha$ in our model and the single hyperparameter for PL in $[0.01, 0.025, 1.5, 2]$ as well as the numbers starting at .05 to 1 in increments of .05 and 1 to 7 in increments of $.5$. For PG-Rank, we vary their single hyperparameter in $[.1, 1, 3, 5, 7, 9, 10, 12, 15, 20, 25, 50, 100]$. We set $\beta$ to be either $\alpha$ or $.35\alpha.$ 

\end{document}